\newtheorem{Proposition}{\textbf{Proposition}}
\newtheorem{proof}{\textbf{Proof}}
\def\BibTeX{{\rm B\kern-.05em{\sc i\kern-.025em b}\kern-.08em   T\kern-.1667em\lower.7ex\hbox{E}\kern-.125emX}}
\begin{document}

\title{Data-driven Method to Ensure Cascade Stability of Traffic Load Balancing in O-RAN Based Networks\\

\thanks{The work is supported by EPSRC CHEDDAR: Communications Hub For Empowering Distributed ClouD Computing Applications And Research (EP/X040518/1) (EP/Y037421/1).}
}

\author{\IEEEauthorblockN{1\textsuperscript{st} Mengbang Zou}
\IEEEauthorblockA{\textit{Engineering and Applied Science School} \\
\textit{Cranfield University, Bedford, UK}\\
m.zou@cranfield.ac.uk}
\and
\IEEEauthorblockN{2\textsuperscript{nd} Yun Tang}
\IEEEauthorblockA{\textit{Engineering and Applied Science School} \\
\textit{Cranfield University, Bedford, UK}\\
yun.tang@cranfield.ac.uk}
\and
\IEEEauthorblockN{3\textsuperscript{rd} Weisi Guo}
\IEEEauthorblockA{\textit{Engineering and Applied Science School} \\
\textit{Cranfield University, Bedford, UK}\\
weisi.guo@cranfield.ac.uk}
\and
}

%TC:ignore
\maketitle

\begin{abstract}
Load balancing in open radio access networks (O-RAN) is critical for ensuring efficient resource utilization, and the user's experience by evenly distributing network traffic load. Current research mainly focuses on designing load-balancing algorithms to allocate resources while overlooking the cascade stability of load balancing, which is critical to prevent endless handover. The main challenge to analyse the cascade stability lies in the difficulty of establishing an accurate mathematical model to describe the process of load balancing due to its nonlinearity and high-dimensionality. In our previous theoretical work, a simplified general dynamic function was used to analyze the stability. However, it is elusive whether this function is close to the reality of the load balance process. To solve this problem, 1) a data-driven method is proposed to identify the dynamic model of the load balancing process according to the real-time traffic load data collected from the radio units (RUs); 2) the stability condition of load balancing process is established for the identified dynamics model. Based on the identified dynamics model and the stability condition, the RAN Intelligent Controller (RIC) can control RUs to achieve a desired load-balancing state while ensuring cascade stability.
\end{abstract}
%TC:endignore

\begin{IEEEkeywords}
Open Radio Access Network (O-RAN), load balance, stability, complex network, system identification, data-driven
\end{IEEEkeywords}

\section{Introduction}
Load balancing is an important operation to homogeneously distribute the traffic demand across the network to improve the resource utilization efficiency, reduce energy consumption and ensure the Quality-of-Service (QoS) \cite{chang2022decentralized, shang2018wireless}. O-RAN \cite{o-ran_alliance}, as a disaggregated, open architecture, separates different network functions (Central Units (CUs), Distributed Units (DUs), and Radio Units (RUs)), providing opportunities for more flexible and efficient management of the radio access network (RAN) \cite{sroka2024policy}. In O-RAN-based networks, RUs serve different geographical areas or cell sectors. By balancing the traffic load across RUs, the network can make sure that resources like radio spectrum and power are efficiently utilized. Some research has studied the policy on how to allocate resources to realize load balancing in O-RAN. However, the cascade stability of load balancing in an O-RAN-based network is rarely studied, which is critical to prevent unnecessary handover in load balancing. 

\subsection{Literature in Load Balancing} 
Current research on load balancing mainly focuses on designing policies to select which base stations (BSs) to handover load from the overloaded BSs and the amount of load to hand over within the traditional radio access network (RAN) architectures. The load balancing problem can be resolved by tuning the logical BS boundaries automatically \cite{park2017mobility}. The traffic load of a BS is periodically monitored, and if it is overloaded, the BS shrinks its coverage area by modifying the parameter called cell individual offset (CIO) to hand over the edge users' service demand to the underloaded neighbouring BSs. The load balancing problem has been studied as an optimization problem by tuning the parameter CIO with different algorithms \cite{asghari2021reinforcement,alsuhli2021mobility, huang2022joint}. More recently, the load balancing problem within O-RAN architecture has been studied in \cite{zafar2024load, orhan2021connection, lai2023intelligent} by designing algorithms e.g. deep reinforcement learning methods to allocate user demands to different RUs. 

The stability of the ideal load balancing state means that all RUs can maintain the ideal balancing state even when perturbations happen, e.g. as users come and go. A stable load balancing mechanism ensures the efficient utilization of resources and avoids frequent user equipment (UE) handovers between RUs (the Ping-Pong effect \cite{zidic2023analyses}). For an arbitrarily large network, instability can cause cascade handover (e.g. user handover in RU-A causes unintended handovers in RU-B, RU-C, etc.), resulting in energy consumption problems without improving QoS. Our previous works \cite{10901638, moutsinas2019probabilistic} have established the stability condition for load balancing and the relationship between load balancing dynamics and network topology. However, they are based on the assumption that there exists a continuous function that adequately describes the load-balancing process. Since it is difficult to establish an accurate mathematics model to describe the dynamic process of load balancing because of the nonlinearity, high dimensionality and heterogeneity (e.g. the introduction of sleep mode \cite{wu2012traffic}, heterogeneous deployment (deploy macro, micro, and pico cells) \cite{wang2023base}, as well as other policies to reduce energy consumption), our previous theory is based on a simplified general dynamic equation instead of a specific one. Therefore, it is not clear whether the assumed dynamic equation is close to the reality of the O-RAN-based network, which makes it difficult to apply in an O-RAN network without an exact dynamics model.
In O-RAN, RUs report real-time traffic load data to the CU and the RAN Intelligent Controller (RIC). Based on the policies and commands received from the RIC, the DUs execute decisions on handing over user demands between RUs. The collected time-series traffic load data from RUs thus reflect the underlying load balancing dynamics.
Therefore, with enough time-series data, we can identify the load dynamics of the system and revise the policy based on the underlying dynamics model to ensure cascade stability in the traffic load-balancing process. 

\subsection{Novelty \& Contribution} 

The novelty and contributions of this paper are summarized as follows:

\begin{itemize} 
    \item \textbf{Data-Driven Dynamic Model Identification:}
    We propose a novel data-driven methodology to accurately identify the load balancing dynamics from real-time traffic load data collected by RUs in an O-RAN network. This approach addresses the critical gap in existing methods that often rely on oversimplified mathematical models.

    \item \textbf{Cascade Stability Analysis}:  
    Based on the identified dynamics, we analyze rigorous stability conditions that ensure cascade stability of the load balancing mechanism. The cascade stability condition is valid in any graph structure and does not require RUs have identical dynamics. This contribution is significant in preventing cascading handovers and associated energy wastage and QoS degradation.

    \item \textbf{Practical RIC Integration:} We blueprint a practical implementation architecture in RIC for real-time traffic load monitoring, dynamics identification, and load balancing policy deployment.

\end{itemize}

\begin{figure}[t]
    \centering
    \includegraphics[width=\columnwidth]{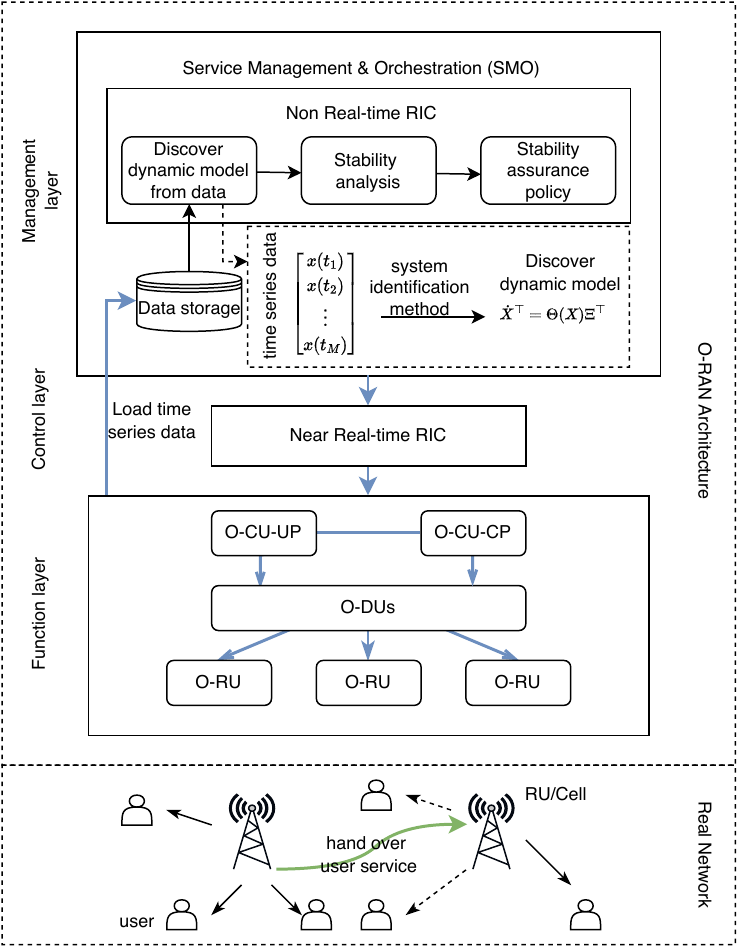}
    \caption{The implementation architecture in O-RAN network. RUs report traffic load to CU via DU through open interface. CU forward load information to Non-Real-time RIC to identify dynamics of RUs and analyze the stability. Near Real-time RIC makes decisions to distribute load among RUs.}
    \label{fig:O-RAN}
\end{figure}

\section{Methodology}
\subsection{System Setup}
The architecture of O-RAN is illustrated in  Fig. \ref{fig:O-RAN}.
In O-RAN, network functions are distributed across O-RAN such as service management and orchestration (SMO), Non Real-time RIC, Near Real-time RIC, CU, DU and RU. These functions communicate with each other through the open interface. In this paper, we consider $1$ CU, $1$ DU and $N$ RUs.

\begin{enumerate}
    \item RU measures traffic load (e.g., PRB utilization) and sends the data to DU via the Open Fronthaul (FH) interface. The DU processes RU reports and sends load data to the CU over the F1 interface. 

    \item The CU forwards load information to Non-Real-time RIC to train a model to identify the dynamics of load balancing and analyze the stability condition. 

    \item Near Real-time RIC makes real-time decisions to distribute load according to the policy configurations from the Non-Real-time RIC. 
\end{enumerate}

\subsection{Model \& Assumptions}
The traffic load definition in this paper is related to the physical resource block (PRB).
For user $u$, the maximum transmission rate over one PRB is 
\begin{equation}
    R_u^t=\alpha\log_2(1+\frac{P_i(t)G_{u,i}(t)}{N_0+\sum_{j \ne i}P_j(t)G_{u,j}(t)}),
\end{equation}
where $\alpha$ is the spectrum bandwidth of one PRB, $P_i(t)$ is the transmission power of RU $i$ at time $t$, $G_{u,i}$ is the channel power gain between user $u$ and RU $i$ at time $t$, $N_0$ is the noise power. Assume that each user has a constant bit rate requirement $M_u(t)$ at time $t$. The number of required PRBs to meet the demand $M_u(t)$ is given by
\begin{equation}
    B_u(t) = {\rm min}\{\frac{M_u^t}{R_u^t}, B_c\},
\end{equation}
where $B_c$ is a constant threshold to limit the number of PRBs of users with poor signal quality to be under a reasonable level \cite{xu2019deep}. The load of RU $i$ is 
\begin{equation}
    l_i(t) = \frac{\sum_{u=1}^{U_i(t)}\tilde{B}_{u}(t)}{B_i(t)},
\end{equation}
where $B_i(t)$ denotes the total number of PRBs in RU $i$. Without loss of generality, given load measure data, the load dynamics in RU $i$ can be described by a continuous function with $t$ as $\frac{dl_i}{dt}=f_i(l_i(t))$.

The handover process aims to transfer the user's service from its serving RU to a neighbouring RU. The handover process is triggered according to event A3, defined by 3GPP, 
\begin{equation}\label{equ: handover}
    M_j+\theta_{j\to i}>Hys + M_i + \theta_{i \to j},
\end{equation}
where $M_i$ and $M_j$ are the user-measured values of reference signal received power (RSRP) from BS $i$ and $j$, $\theta_{i \to j}$ is the BS individual offset value of BS $i$ concerning BS $j$. $Hys$ is a hysteresis parameter defined for all RUs. The CIO is an offset that can be applied to alter the handover decision, which affects the radio service coverage of RU. Traffic load can be handed over to neighbour RUs with relatively low traffic load by adjusting the value of CIO, if the traffic load is high in the RU. The load balancing policy is based on the relative load levels compared to the neighbouring RUs. Therefore, the offload dynamics between two nodes should be a function in terms of the load $l_i$ and $l_j$, i.e. $g(l_i, l_j)$. 

\subsection{Data driven method to identify the dynamics}
The load dynamics of each RU in a wireless network within the O-RAN is 
\begin{equation}\label{equ: coupling_dynamics}
    \dot{l}_i = f_i(l_i)+\sum_{j=1, j\ne i}^Na_{ij}g_{ij}(l_i, l_j),
\end{equation}
where $f_i(l_i)$ is the self-load dynamics of a RU related to the PRB, $g_{ij}(l_i, l_j)$ is the offloading dynamics between RUs related to the handover process as discussed before. $a_{ij}$ is the element of the connection matrix of the network. If $a_{ij}=1$, then RU $i$ and $j$ can share the load with each other. Otherwise, $a_{ij}=0$, the handover can not happen between them. We only know that the load dynamics function of each RU is decided by its load and the neighbouring RUs' load. The exact equation expression of the dynamics is still unknown. 

It is challenging to establish the mathematics models for a high-dimensional non-linear system in the real world, e.g. the accurate dynamics equations to describe the load dynamics and offloading dynamics. However, we can obtain abundant time-series load data from a real network, e.g. BubbleRAN \cite{bimo2023design}. This enables us to derive a data-driven system identification method to fit this dynamics process based on the data from the real system. 

First, time-series data is collected from the O-RAN equipment (BubbleRAN) or the simulator and formed into a data matrix: $\mathbf{L}=[\mathbf{l}(t_1) \quad \mathbf{l}(t_2) \cdots \mathbf{l}(t_m)]^{\top}$. We need to get the load rate matrix $\dot{\mathbf{L}}=[\dot{l}(t_1) \quad \dot{l}(t_2) \cdots \dot{l}(t_m)]^{\top}$ which can be estimated by $\dot{l}_i(t) \approx \frac{l(t+\Delta t)-l(t)}{\Delta t}$. Let $\Theta_f(l_i)$ be the library of candidate non-linear functions for self dynamics $f_i$ of RU $i$, e.g. the polynomials $\Theta_f(l_i)=[1 \quad l_i \quad l_i^2 \cdots l_i^d]$. The offloading dynamics are based on the relative load levels $l_i-l_j$, so it is natural to consider the library of candidate functions as $\Theta_g(l_i, l_j) = [a_{ij}(l_i-l_j) \quad a_{ij}(l_i-l_j)^2 \cdots a_{ij}(l_i-l_j)^d]$. Then the integrated library for RU $i$ is $\Theta(l_i) = [\Theta_f(l_i) \quad \Theta_g(l_i, l_j)]$. 

The dynamics of each RU can be represented in terms of the data and the function library as 
\begin{equation}
    \dot{l_i} = \Theta(l_i)\xi_i,
\end{equation}
where $\xi_i$ is the coefficient vector, which can be identified using a sparse regression as 
\begin{equation} \label{equ: sindy}
  \xi_i = \mathop{\arg\min}\limits_{\xi_i'}||\dot{l}_i-\Theta(l_i)\xi_i'||_2+ \gamma||\xi_i'||_1,
\end{equation}
where $\gamma$ is the parameter to control sparsity. The exact load dynamics of each RU can be obtained by solving equation~(\ref{equ: sindy}).

\subsection{Assure stability based on the identified dynamics}
Based on the load dynamics function, we can analyze the stability of the load balancing process. Now rewrite the load dynamics based on the function library with the connection matrix as 
\begin{equation}\label{equ: identify}
    \dot{l}_i = \Theta_f(l_i)\xi_{i}^f+ 
    \Theta_g(l_i, l_j)\xi_i^g
\end{equation}
The desired state for the service efficiency of the load balancing state is $l_1 = l_2 =\cdots l_N=1$. To ensure the stability of the desired state, we need to clarify the stability condition for the system.

\begin{Proposition}
    If the self-load dynamics of each RU satisfies that $f_i(l_i=1)=0, f_i'(l_i)<0$, the offloading dynamics rate between neighbours $i$ and $j$ at $l_i=1$ is the same, $\frac{\partial g_{ij}(l_i,l_j)}{\partial l_i}=\frac{\partial g_{ji}(l_j,l_i)}{\partial l_j}<0$, then the load balancing process can assure stability at the desired state $l_1=l_2=\cdots l_N=1$.
\end{Proposition}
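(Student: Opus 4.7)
The plan is to establish local asymptotic stability of the equilibrium $l^{*}=(1,1,\dots,1)$ by linearising (\ref{equ: coupling_dynamics}) and showing that the Jacobian there is negative definite; the Hartman--Grobman theorem then delivers asymptotic stability.

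First I would verify that $l^{*}$ is an equilibrium. The self term vanishes at $l_i=1$ by hypothesis. For the coupling term, the library $\Theta_g$ used in the identification implies that every $g_{ij}$ depends on $(l_i,l_j)$ only through the difference $l_i-l_j$, so $g_{ij}(1,1)=0$ and $\dot l^{*}=0$. The same structural fact yields the pointwise identity $\partial g_{ij}/\partial l_j=-\partial g_{ij}/\partial l_i$. Setting $c_{ij}:=-\partial_{l_i}g_{ij}(1,1)>0$, the Jacobian $J$ at $l^{*}$ then has entries
\begin{equation*}
J_{ii}=f_i'(1)-\sum_{j\ne i}a_{ij}c_{ij},\qquad J_{ij}=a_{ij}c_{ij}\ \text{for}\ j\ne i.
\end{equation*}
The equal-sensitivity hypothesis $\partial_{l_i}g_{ij}(1,1)=\partial_{l_j}g_{ji}(1,1)$ gives $c_{ij}=c_{ji}$, and together with the undirected-graph convention $a_{ij}=a_{ji}$ implicit in (\ref{equ: coupling_dynamics}), the matrix $J$ is symmetric.

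The key step is then to recognise the decomposition
\begin{equation*}
J=-\mathcal{L}+\mathrm{diag}\!\bigl(f_1'(1),\dots,f_N'(1)\bigr),
\end{equation*}
where $\mathcal{L}$ is the weighted graph Laplacian with edge weights $a_{ij}c_{ij}\ge 0$. Every graph Laplacian is positive semidefinite, so $-\mathcal{L}$ is negative semidefinite, and the diagonal correction is negative definite because each $f_i'(1)<0$. Their sum is therefore negative definite, every eigenvalue of $J$ is strictly negative, and $l^{*}$ is locally asymptotically stable.

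I anticipate the main obstacle to be justifying the Laplacian decomposition rigorously. It depends on the skew relation $\partial_{l_i}g_{ij}(1,1)=-\partial_{l_j}g_{ij}(1,1)$, which is natural for a handover process that merely redistributes load across an edge and is consistent with the chosen library $\Theta_g$, but must be argued from those structural grounds rather than being read off from the proposition statement itself. A safe fallback, if one prefers not to invoke the library form, is to apply Gershgorin's theorem directly: the strict diagonal dominance $|J_{ii}|=-f_i'(1)+\sum_j a_{ij}c_{ij}>\sum_{j\ne i}|J_{ij}|$ inherited from $f_i'(1)<0$ places every eigenvalue of $J$ strictly in the left half-plane.
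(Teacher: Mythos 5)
Your proposal is correct and follows essentially the same route as the paper: linearise about $l^{*}=(1,\dots,1)$, use the difference-only form of $g_{ij}$ from the library $\Theta_g$ to get the skew relation $\partial_{l_j}g_{ij}=-\partial_{l_i}g_{ij}$, and decompose the Jacobian as a negative diagonal part $\mathrm{diag}(f_i'(1))$ plus (minus) a symmetric weighted graph Laplacian. The only difference is in the last step — you conclude via negative definiteness of the sum, whereas the paper invokes the Gershgorin circle theorem on $\mathbf{K}=\mathbf{F}'(1)+\mathbf{Q}-\mathbf{A}\odot\mathbf{P}$ — but you already offer exactly that Gershgorin argument as your fallback, so the two proofs coincide in substance.
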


\begin{proof}
    The dynamics of RU $i$ around $l_i=1$ can be estimated by the first order Taylor expansion around $l_i=1$ as 
    \[
        \Delta \dot{l}_i = f'_i(l_i)\Delta l_i+\sum a_{ij}(\frac{\partial g_{ij}(l_i, l_j)}{\partial l_i}\Delta l_i+\frac{\partial g_{ij}(l_i, l_j)}{\partial l_j}\Delta l_j)
    \]
    According to the data-driven method to identify the system, $g_{ij}(l_i, l_j)$ can be represented by the library of candidate functions, i.e. the polynomials $[1 \quad l_i-l_j \quad (l_i-l_j)^2 \cdots (l_i-l_j)^d]$.  $g_{ij}(l_i,l_j)=\sum_{k=0}^{d}p_{ij}(l_i-l_j)^k$. $\frac{\partial g_{ij}(l_i, l_j)}{\partial l_i}=p_{ij}$, $\frac{\partial g_{ij}(l_i, l_j)}{\partial l_j}=-p_{ij}$ at $l_i =l_j =1$. Write $\Delta \dot{l}_i$ in a matrix form $\Delta\dot{\mathbf{l}} = \mathbf{F}'(1)\Delta \mathbf{l}+ (\mathbf{Q}-\mathbf{A} \odot \mathbf{P})\Delta \mathbf{l}$, where $\mathbf{F}'(1)$ is a diagonal matrix $\mathbf{F}'(1)= {\rm diag}\{f_1'(1),f_2'(1),\cdots, f_N'(1)\}$. $\odot$ is a Hadamard product, $\mathbf{A}$ is the connection matrix, $\mathbf{P}$ is the matrix with the element $p_{ij}$. $\mathbf{Q}$ is a diagonal matrix with element $q_{ii} = \sum_j a_{ij}p_{ij}$. If $\frac{\partial g_{ij}(l_i,l_j)}{\partial l_i}=\frac{\partial g_{ji}(l_j,l_i)}{\partial l_j}$, then $p_{ij}=p_{ji}$, $\mathbf{P}$ is a symmetric matrix. $\mathbf{A}$ is a symmetric matrix, then $\mathbf{A}\odot\mathbf{P}$ is a symmetric matrix and $\mathbf{Q}-\mathbf{A}\odot\mathbf{P}$ is a Laplacian matrix. Let's assume that $\mathbf{K}= \mathbf{F}'(1)+\mathbf{Q}-\mathbf{A}\odot\mathbf{P}$. According to the Gershgorin circle theorem \cite{varga2011gervsgorin},

\begin{equation}\label{equ: gershgorin}
    |\lambda_i - K_{ii}| \le \sum_{j=1, j\ne i}|K_{ij}|.
\end{equation}
Since $|K_{ii}|>\sum_{j=1, j\ne i}|K_{ij}|$, $K_{ii}<0$, $\lambda_i < 0$. $\Delta \dot{\mathbf{l}}_i = \mathbf{K}\Delta \mathbf{l}$, with all eigenvalues of $\mathbf{K}$ are negative when $l_1 = l_2 = \cdots = l_N =1$. Therefore, the equilibrium is stable.
\end{proof} 

Therefore, to make the system maintain stability at the desired state the identified dynamic function in equation (\ref{equ: identify}) should satisfy Proposition 1. RIC identify the dynamics of each RU from their reported data and then controls the RUs' load balancing process to satisfy the stability condition.

\section{Simulation \& Results} 
Extensive simulations were conducted using an in-house O-RAN load simulator to validate the proposed method under realistic network conditions. 12 Radio Units (RUs) were uniformly distributed within a rectangular area (though the method is also applicable to stochastic geometry models), serving randomly generated User Equipments (UEs). The RUs featured adjustable Physical Resource Block (PRB) resources, coverage areas, and customizable policies for handover and PRB allocation, closely mimicking real-world dynamics.

\subsection{Setup and Scenario} 
In the simulation, UEs were randomly spawned with configurable counts, PRB demands, and lifetimes. Each UE initially connected to its geographically nearest RU, triggering handovers to neighbouring RUs when resource constraints arose. RUs managed PRB resources dynamically, adjusting between minimum thresholds and maximum capacities based on current load conditions. UEs received resources on a first-come-first-served basis, up to their requested PRBs, limited by the RU's available capacity. When an RU reached full PRB utilization and could not fully serve all connected UEs, handovers occurred to redistribute underserved UEs to neighbouring RUs with available PRB resources.

\begin{figure}[ht]
    \centering
    \renewcommand{\thesubfigure}{}
    \subfigure[]{
 \includegraphics[width=8cm, height=4cm]{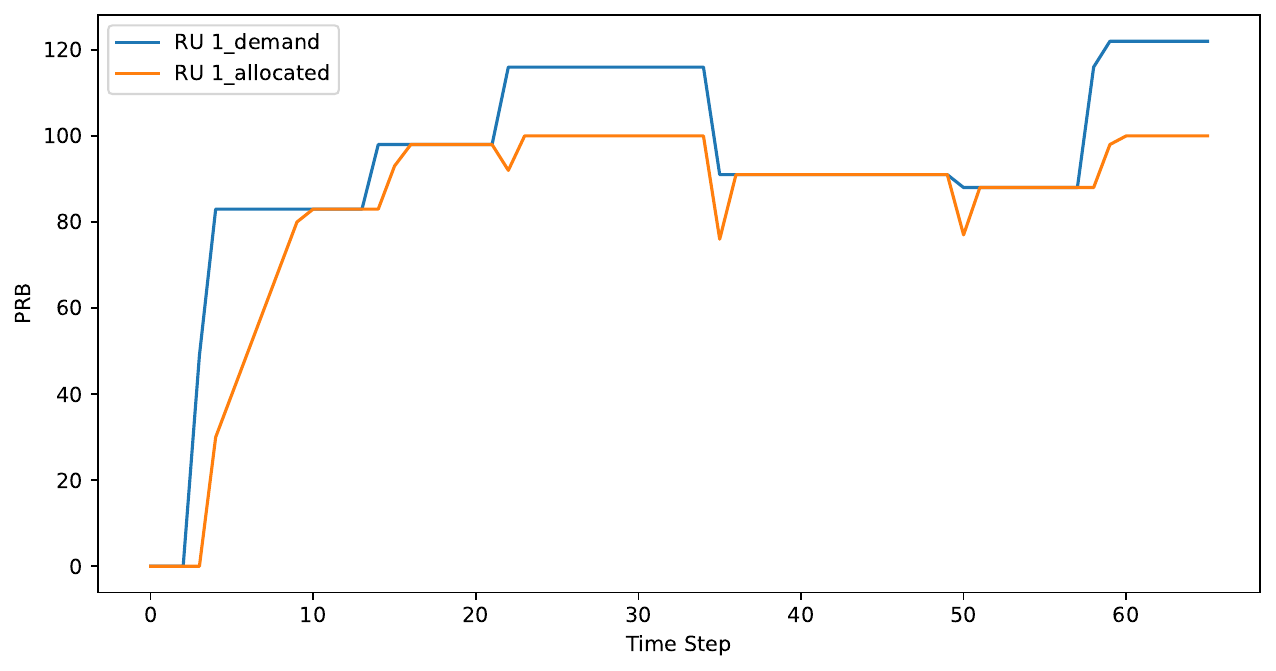}}
    
    \subfigure[]{
\includegraphics[width=8cm, height=4cm]{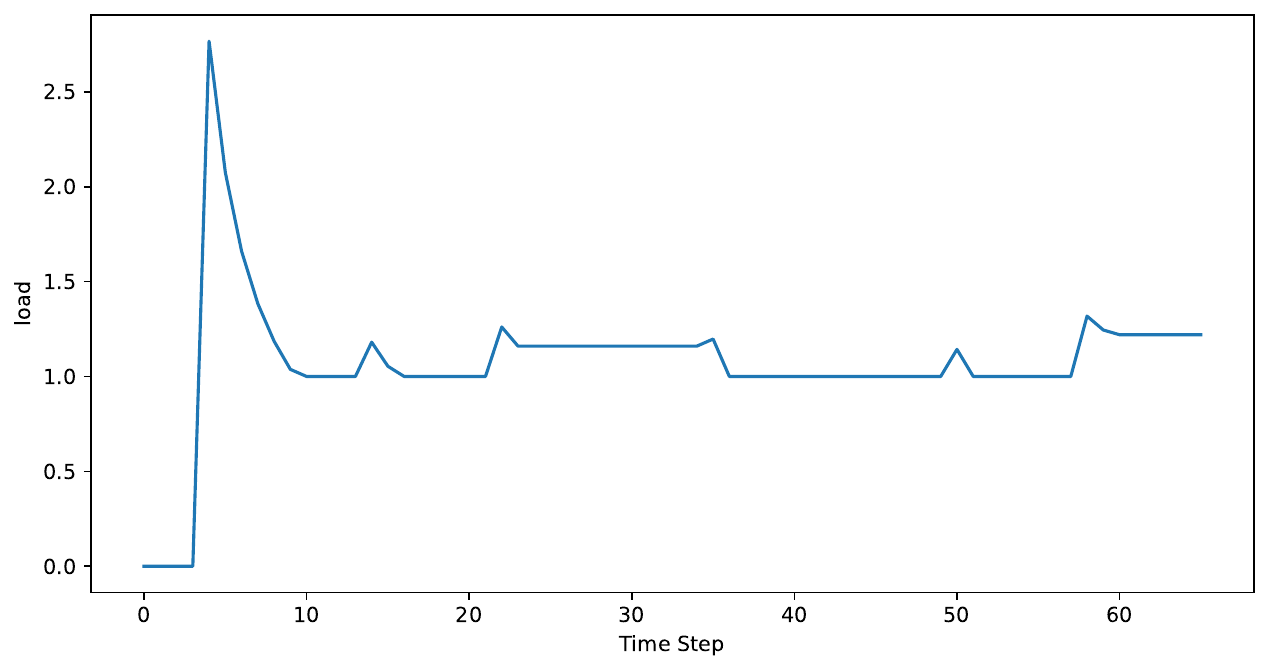}}
    
    \caption{(a) The time series data of demand PRB and allocated demand of RU 1. (b) The time series data of RU 1 load}
    \label{fig: load_time}
\end{figure} 

\begin{figure}[t]
    \centering
\renewcommand{\thesubfigure}{}
\includegraphics[width=\columnwidth]{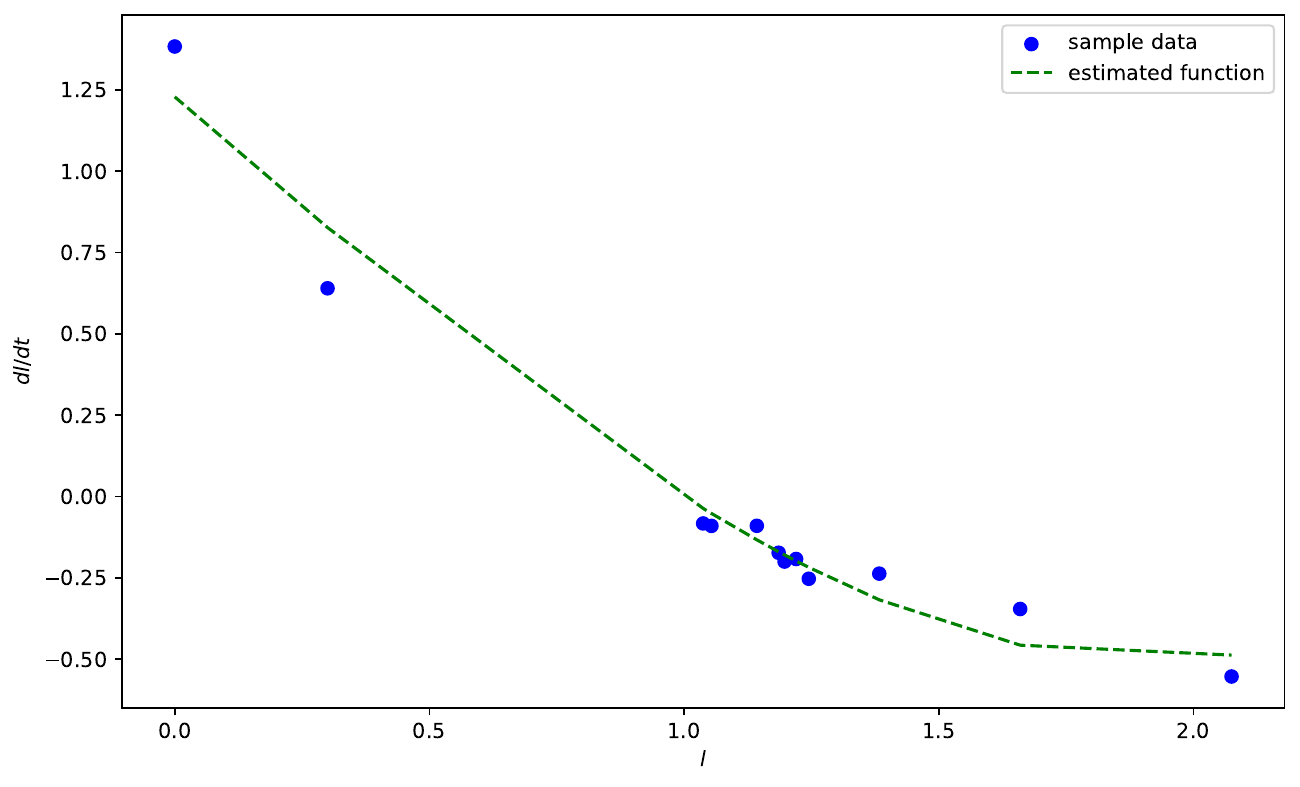}
    \caption{Identify the self-load dynamics from the collected data of RU 1}
    \label{fig: sindy_self}
\end{figure}

\begin{figure}[ht]
    \centering
\renewcommand{\thesubfigure}{} 
    \subfigure[]{
\includegraphics[width=4.2cm, height=3cm]{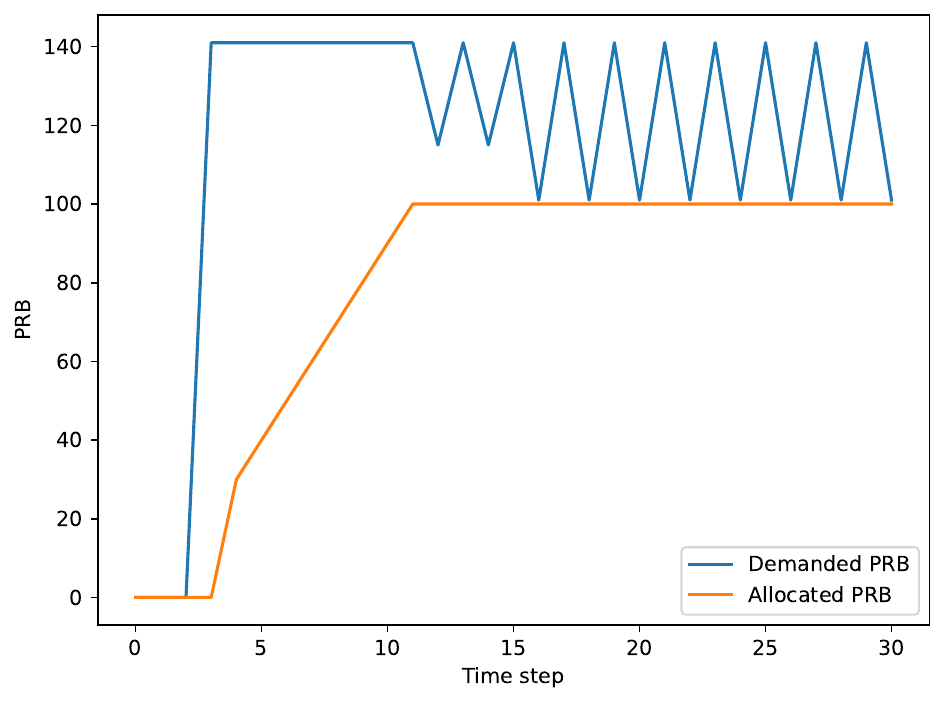}}
    \subfigure[]{\includegraphics[width=4.2cm, height=3cm]{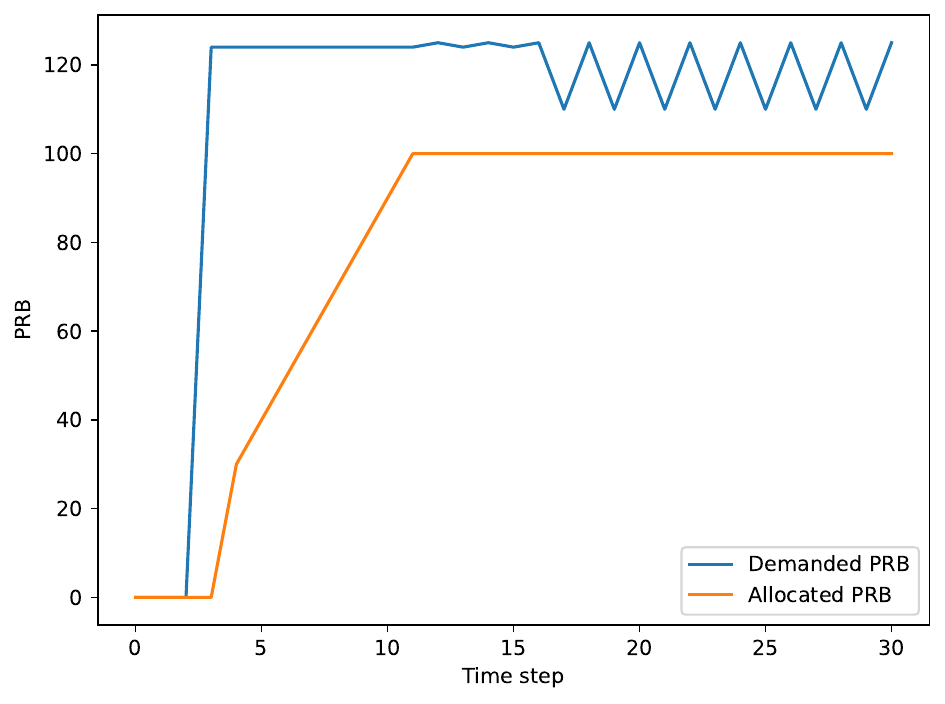}}
    
    \subfigure[]{\includegraphics[width=4.2cm, height=3cm]{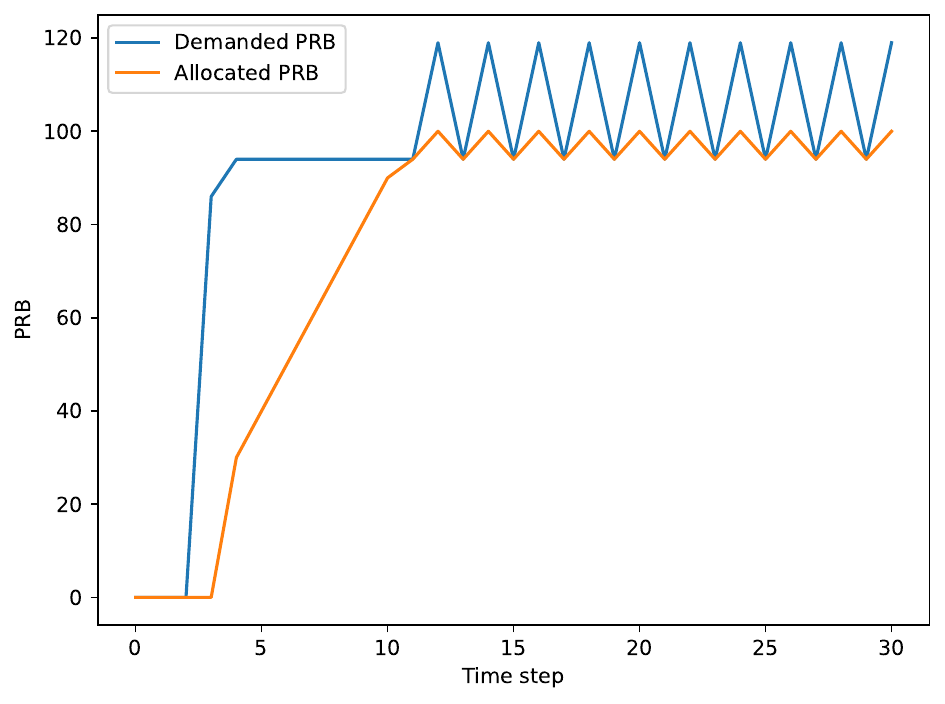}}
    \subfigure[]{\includegraphics[width=4.2cm, height=3cm]{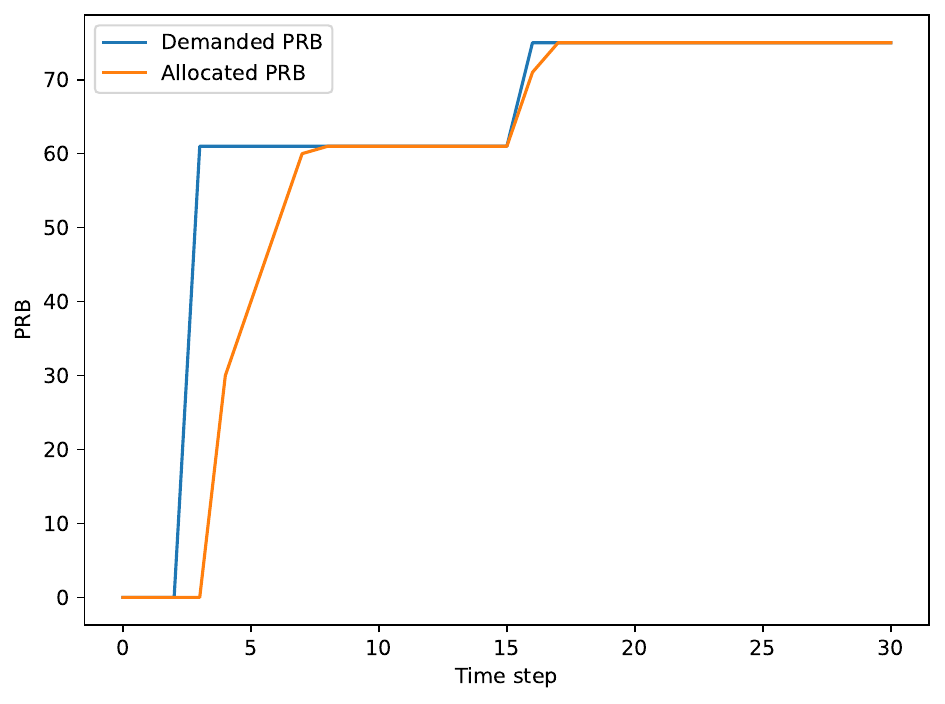}}

    \caption{Ping-pong effect among three RUs.}
    \label{fig: pingpong}
\end{figure} 

\begin{figure}[t]
    \centering
\renewcommand{\thesubfigure}{}
\includegraphics[width=\columnwidth]{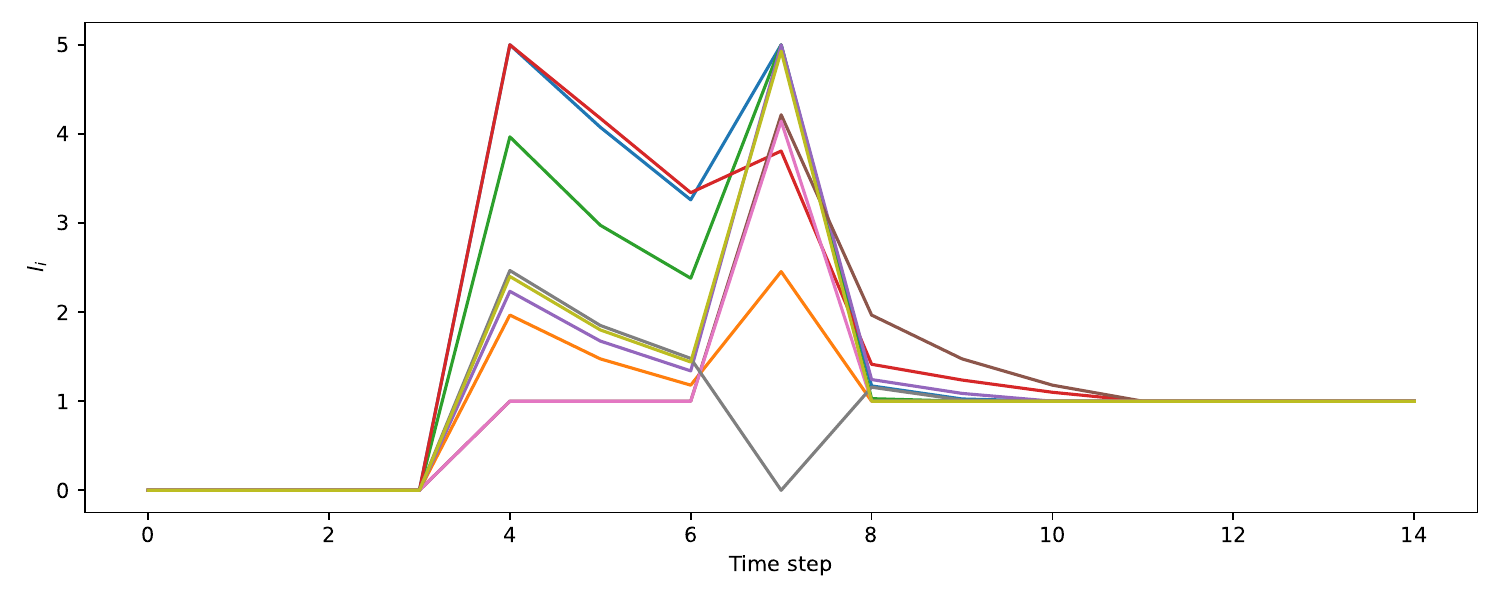}
    \caption{RUs synchronize to the desired state and maintain stability.}
    \label{fig: stable}
\end{figure}
\subsection{Results and Observations}
First, we collect the time series data of RUs without handover. The dynamics of RU 1 are shown in Fig. \ref{fig: load_time}. We can see that when users' service demand increases, RU adaptively adjusts the allocated PRB to satisfy users' service demand. Correspondingly, the load of RU 1 starts to decrease and achieve the desired state $l_i=1$. At around  ${\rm time step}=20$, the demand increases while the allocated PRB decreases, this is because during this time some users left the coverage area of RU 1 while other new users joined this area. So the RU initially decreased the allocated PRB and then increased PRB to satisfy users' demand. This is common in reality since it takes some time for RU to respond to the demand variation. However, when the service demand of users increased over 100 PRB, which is larger than the maximum capability of RU 1, RU 1 cannot satisfy all service demand. Without the handover of the service, the load of RU 1 can not return to the desired state. With the load time series data of RUs, we can identify the self-dynamics of each RU by the proposed data-driven method in Fig. \ref{fig: sindy_self}, $\dot{l}_1 = 1.229-1.35 l_1 + 0.122 l_1^3$. The self dynamics of RU satisfies the condition $\frac{d\dot{l}_i}{dl_i}<0$ at $l_i=1$, so $l_i$ is stable. 

Now consider the handover situation. 12 RUs can offload to their neighbours. When the handover policy does not satisfy the proposition that $\frac{\partial g_{ij}(l_i,l_j)}{\partial l_i}=\frac{\partial g_{ji}(l_j,l_i)}{\partial l_j}<0$. The introduction of new users' demand causes the handover effect among three RUs, even another RU can stabilise at the desired state (shown in Fig. \ref{fig: pingpong}). In this case, the RIC detects this abnormal situation, and adaptively adjusts the policy to make the load balancing process satisfies the stable condition. The dynamics can be identified by the proposed data-driven method. As a result, RUs synchronize to the desired stable state (shown in Fig. \ref{fig: stable}).

These results confirm that the proposed data-driven dynamic modelling and stability analysis approach effectively assures O-RAN load balancing stability and user experience.

% \section{Discussion}
% This study addresses the key limitation of existing load balancing strategies in O-RAN: the lack of accurate dynamics modelling and stability guarantees. The results illustrate that using real-time traffic load data for dynamic modelling significantly enhances the realism and effectiveness of load-balancing control strategies.

% Our approach ensures cascade stability, a critical aspect neglected by previous research. The simulation outcomes highlight that ignoring stability can cause substantial network inefficiencies, frequent user handovers, and reduced QoS. The proposed method not only mitigates these issues but also improves the overall reliability and robustness of O-RAN networks.

% However, the accuracy of our data-driven model identification could depend on the volume and quality of traffic data collected. Future work might explore the robustness of our method under limited data conditions or noise.

% Additionally, extending the proposed method to include adaptive model updating mechanisms in real-time scenarios could further enhance its effectiveness in highly dynamic network conditions.

\section{Conclusion}
In this paper, we introduced a novel data-driven approach to accurately model load balancing dynamics and ensure cascade stability in O-RAN networks. The primary contributions include a practical method for dynamic model identification from RU traffic data, a theoretical framework for stability analysis based on the identified dynamics, and integration with RIC for real-time control to ensure the cascade stability of the load balancing process. Simulation results demonstrate substantial improvements in cascade stability, reduced Ping-Pong handover effects, and balanced load distribution compared to the load balancing policy without stability assurance. By leveraging real-time network data and stability-aware control strategies, this work significantly contributes to enhancing O-RAN reliability, efficiency, and QoS.
Future research directions include investigating adaptive and robust modelling approaches for real-time deployment and exploring more robust methodology that deals with noisy data.

\bibliographystyle{IEEEtran}
\bibliography{ref}

% Generated by IEEEtran.bst, version: 1.14 (2015/08/26)
\begin{thebibliography}{10}
\providecommand{\url}[1]{#1}
\csname url@samestyle\endcsname
\providecommand{\newblock}{\relax}
\providecommand{\bibinfo}[2]{#2}
\providecommand{\BIBentrySTDinterwordspacing}{\spaceskip=0pt\relax}
\providecommand{\BIBentryALTinterwordstretchfactor}{4}
\providecommand{\BIBentryALTinterwordspacing}{\spaceskip=\fontdimen2\font plus
\BIBentryALTinterwordstretchfactor\fontdimen3\font minus \fontdimen4\font\relax}
\providecommand{\BIBforeignlanguage}[2]{{%
\expandafter\ifx\csname l@#1\endcsname\relax
\typeout{** WARNING: IEEEtran.bst: No hyphenation pattern has been}%
\typeout{** loaded for the language `#1'. Using the pattern for}%
\typeout{** the default language instead.}%
\else
\language=\csname l@#1\endcsname
\fi
#2}}
\providecommand{\BIBdecl}{\relax}
\BIBdecl

\bibitem{chang2022decentralized}
H.-H. Chang, H.~Chen, J.~Zhang, and L.~Liu, ``Decentralized deep reinforcement learning meets mobility load balancing,'' \emph{IEEE/ACM Transactions on Networking}, vol.~31, no.~2, pp. 473--484, 2022.

\bibitem{shang2018wireless}
B.~Shang, L.~Zhao, K.-C. Chen, and X.~Chu, ``Wireless-powered device-to-device-assisted offloading in cellular networks,'' \emph{IEEE Transactions on Green Communications and Networking}, vol.~2, no.~4, pp. 1012--1026, 2018.

\bibitem{o-ran_alliance}
\BIBentryALTinterwordspacing
{O-RAN Alliance}, ``O-ran alliance e.v,'' \url{https://www.o-ran.org/}, 2023, accessed: 2025-03-12. [Online]. Available: \url{https://www.o-ran.org/}
\BIBentrySTDinterwordspacing

\bibitem{sroka2024policy}
P.~Sroka, {\L}.~Kulacz, S.~Janji, M.~Dryja{\'n}ski, and A.~Kliks, ``Policy-based traffic steering and load balancing in o-ran-based vehicle-to-network communications,'' \emph{IEEE Transactions on Vehicular Technology}, 2024.

\bibitem{park2017mobility}
J.~Park, Y.~Kim, and J.-R. Lee, ``Mobility load balancing method for self-organizing wireless networks inspired by synchronization and matching with preferences,'' \emph{IEEE Transactions on Vehicular Technology}, vol.~67, no.~3, pp. 2594--2606, 2017.

\bibitem{asghari2021reinforcement}
M.~Z. Asghari, M.~Ozturk, and J.~H{\"a}m{\"a}l{\"a}inen, ``Reinforcement learning based mobility load balancing with the cell individual offset,'' in \emph{2021 IEEE 93rd Vehicular Technology Conference (VTC2021-Spring)}.\hskip 1em plus 0.5em minus 0.4em\relax IEEE, 2021, pp. 1--5.

\bibitem{alsuhli2021mobility}
G.~Alsuhli, K.~Banawan, K.~Attiah, A.~Elezabi, K.~G. Seddik, A.~Gaber, M.~Zaki, and Y.~Gadallah, ``Mobility load management in cellular networks: A deep reinforcement learning approach,'' \emph{IEEE Transactions on Mobile Computing}, vol.~22, no.~3, pp. 1581--1598, 2021.

\bibitem{huang2022joint}
M.~Huang and J.~Chen, ``Joint load balancing and spatial-temporal prediction optimization for ultra-dense network,'' in \emph{2022 IEEE Wireless Communications and Networking Conference (WCNC)}.\hskip 1em plus 0.5em minus 0.4em\relax IEEE, 2022, pp. 506--511.

\bibitem{zafar2024load}
H.~Zafar, E.~Tohidi, M.~Kasparick, and S.~Sta{\'n}czak, ``Load balancing in o-ran,'' in \emph{2024 IEEE Wireless Communications and Networking Conference (WCNC)}.\hskip 1em plus 0.5em minus 0.4em\relax IEEE, 2024, pp. 1--6.

\bibitem{orhan2021connection}
O.~Orhan, V.~N. Swamy, T.~Tetzlaff, M.~Nassar, H.~Nikopour, and S.~Talwar, ``Connection management xapp for o-ran ric: A graph neural network and reinforcement learning approach,'' in \emph{2021 20th IEEE international conference on machine learning and applications (ICMLA)}.\hskip 1em plus 0.5em minus 0.4em\relax IEEE, 2021, pp. 936--941.

\bibitem{lai2023intelligent}
C.-H. Lai, L.-H. Shen, and K.-T. Feng, ``Intelligent load balancing and resource allocation in o-ran: A multi-agent multi-armed bandit approach,'' in \emph{2023 IEEE 34th Annual International Symposium on Personal, Indoor and Mobile Radio Communications (PIMRC)}.\hskip 1em plus 0.5em minus 0.4em\relax IEEE, 2023, pp. 1--6.

\bibitem{zidic2023analyses}
D.~Zidic, T.~Mastelic, I.~N. Kosovic, M.~Cagalj, and J.~Lorincz, ``Analyses of ping-pong handovers in real 4g telecommunication networks,'' \emph{Computer networks}, vol. 227, p. 109699, 2023.

\bibitem{10901638}
M.~Zou and W.~Guo, ``Cascade network stability of synchronized traffic load balancing with heterogeneous energy efficiency policies,'' in \emph{GLOBECOM 2024 - 2024 IEEE Global Communications Conference}, 2024, pp. 1083--1088.

\bibitem{moutsinas2019probabilistic}
G.~Moutsinas and W.~Guo, ``Probabilistic stability of traffic load balancing on wireless complex networks,'' \emph{IEEE Systems Journal}, vol.~14, no.~2, pp. 2551--2556, 2019.

\bibitem{wu2012traffic}
J.~Wu, Y.~Wu, S.~Zhou, and Z.~Niu, ``Traffic-aware power adaptation and base station sleep control for energy-delay tradeoffs in green cellular networks,'' in \emph{IEEE Global Communications Conference}, 2012.

\bibitem{wang2023base}
X.~Wang, B.~Lyu, C.~Guo, J.~Xu, and M.~Zukerman, ``A base station sleeping strategy in heterogeneous cellular networks based on user traffic prediction,'' \emph{IEEE Transactions on Green Communications and Networking}, 2023.

\bibitem{xu2019deep}
Y.~Xu, W.~Xu, Z.~Wang, J.~Lin, and S.~Cui, ``Deep reinforcement learning based mobility load balancing under multiple behavior policies,'' in \emph{ICC 2019-2019 IEEE International Conference on Communications (ICC)}.\hskip 1em plus 0.5em minus 0.4em\relax IEEE, 2019, pp. 1--6.

\bibitem{bimo2023design}
F.~A. Bimo, R.-G. Cheng, C.-C. Tseng, C.-R. Chiang, C.-H. Huang, and X.-W. Lin, ``Design and implementation of next-generation research platforms,'' in \emph{2023 IEEE Globecom Workshops (GC Wkshps)}.\hskip 1em plus 0.5em minus 0.4em\relax IEEE, 2023, pp. 1777--1782.

\bibitem{varga2011gervsgorin}
R.~S. Varga, \emph{Ger{\v{s}}gorin and his circles}.\hskip 1em plus 0.5em minus 0.4em\relax Springer Science \& Business Media, 2011, vol.~36.

\end{thebibliography}

\end{document}